
\documentclass[submission,copyright,creativecommons]{eptcs}
\usepackage{breakurl}

\usepackage{hyperref}
\usepackage{amsfonts}
\usepackage{amsmath}
\usepackage{amssymb}
\usepackage{amsthm}
\usepackage{amsbsy}
\usepackage{enumerate}
\usepackage{stackrel}
\usepackage{bm}
\usepackage[toc,page]{appendix}
\usepackage{stmaryrd}
\usepackage{wasysym}
\usepackage{color}

\pagestyle{plain}





\newdimen\proofrulebreadth \proofrulebreadth=.05em
\newdimen\proofdotseparation \proofdotseparation=1.25ex
\newdimen\proofrulebaseline \proofrulebaseline=2ex
\newcount\proofdotnumber \proofdotnumber=3
\let\then\relax
\def\hfi{\hskip0pt plus.0001fil}
\mathchardef\squigto="3A3B
%
\newif\ifinsideprooftree\insideprooftreefalse
\newif\ifonleftofproofrule\onleftofproofrulefalse
\newif\ifproofdots\proofdotsfalse
\newif\ifdoubleproof\doubleprooffalse
\let\wereinproofbit\relax
%
\newdimen\shortenproofleft
\newdimen\shortenproofright
\newdimen\proofbelowshift
\newbox\proofabove
\newbox\proofbelow
\newbox\proofrulename
%
\def\shiftproofbelow{\let\next\relax\afterassignment\setshiftproofbelow\dimen0 }
\def\shiftproofbelowneg{\def\next{\multiply\dimen0 by-1 }%
\afterassignment\setshiftproofbelow\dimen0 }
\def\setshiftproofbelow{\next\proofbelowshift=\dimen0 }
\def\setproofrulebreadth{\proofrulebreadth}

\def\prooftree{
%
\ifnum  \lastpenalty=1
\then   \unpenalty
\else   \onleftofproofrulefalse
\fi
%
\ifonleftofproofrule
\else   \ifinsideprooftree
        \then   \hskip.5em plus1fil
        \fi
\fi
%
\bgroup
\setbox\proofbelow=\hbox{}\setbox\proofrulename=\hbox{}%
\let\justifies\proofover\let\leadsto\proofoverdots\let\Justifies\proofoverdbl
\let\using\proofusing\let\[\prooftree
\ifinsideprooftree\let\]\endprooftree\fi
\proofdotsfalse\doubleprooffalse
\let\thickness\setproofrulebreadth
\let\shiftright\shiftproofbelow \let\shift\shiftproofbelow
\let\shiftleft\shiftproofbelowneg
\let\ifwasinsideprooftree\ifinsideprooftree
\insideprooftreetrue
%
\setbox\proofabove=\hbox\bgroup$\displaystyle 
\let\wereinproofbit\prooftree
%
\shortenproofleft=0pt \shortenproofright=0pt \proofbelowshift=0pt
%
\onleftofproofruletrue\penalty1
}

\def\eproofbit{
%
\ifx    \wereinproofbit\prooftree
\then   \ifcase \lastpenalty
        \then   \shortenproofright=0pt  
        \or     \unpenalty\hfil         
        \or     \unpenalty\unskip       
        \else   \shortenproofright=0pt  
        \fi
\fi
%
\global\dimen0=\shortenproofleft
\global\dimen1=\shortenproofright
\global\dimen2=\proofrulebreadth
\global\dimen3=\proofbelowshift
\global\dimen4=\proofdotseparation
\global\count255=\proofdotnumber
%
$\egroup  
%
\shortenproofleft=\dimen0
\shortenproofright=\dimen1
\proofrulebreadth=\dimen2
\proofbelowshift=\dimen3
\proofdotseparation=\dimen4
\proofdotnumber=\count255
}

\def\proofover{
\eproofbit 
\setbox\proofbelow=\hbox\bgroup 
\let\wereinproofbit\proofover
$\displaystyle
}%
%
\def\proofoverdbl{
\eproofbit 
\doubleprooftrue
\setbox\proofbelow=\hbox\bgroup 
\let\wereinproofbit\proofoverdbl
$\displaystyle
}%
%
\def\proofoverdots{
\eproofbit 
\proofdotstrue
\setbox\proofbelow=\hbox\bgroup 
\let\wereinproofbit\proofoverdots
$\displaystyle
}%
%
\def\proofusing{
\eproofbit 
\setbox\proofrulename=\hbox\bgroup 
\let\wereinproofbit\proofusing
\kern0.3em$
}

\def\endprooftree{
\eproofbit 
  \dimen5 =0pt
%
\dimen0=\wd\proofabove \advance\dimen0-\shortenproofleft
\advance\dimen0-\shortenproofright
%
\dimen1=.5\dimen0 \advance\dimen1-.5\wd\proofbelow
\dimen4=\dimen1
\advance\dimen1\proofbelowshift \advance\dimen4-\proofbelowshift
%
\ifdim  \dimen1<0pt
\then   \advance\shortenproofleft\dimen1
        \advance\dimen0-\dimen1
        \dimen1=0pt
        \ifdim  \shortenproofleft<0pt
        \then   \setbox\proofabove=\hbox{%
                        \kern-\shortenproofleft\unhbox\proofabove}%
                \shortenproofleft=0pt
        \fi
\fi
%
\ifdim  \dimen4<0pt
\then   \advance\shortenproofright\dimen4
        \advance\dimen0-\dimen4
        \dimen4=0pt
\fi
%
\ifdim  \shortenproofright<\wd\proofrulename
\then   \shortenproofright=\wd\proofrulename
\fi
%
\dimen2=\shortenproofleft \advance\dimen2 by\dimen1
\dimen3=\shortenproofright\advance\dimen3 by\dimen4
%
\ifproofdots
\then
        \dimen6=\shortenproofleft \advance\dimen6 .5\dimen0
        \setbox1=\vbox to\proofdotseparation{\vss\hbox{$\cdot$}\vss}%
        \setbox0=\hbox{%
                \advance\dimen6-.5\wd1
                \kern\dimen6
                $\vcenter to\proofdotnumber\proofdotseparation
                        {\leaders\box1\vfill}$%
                \unhbox\proofrulename}%
\else   \dimen6=\fontdimen22\the\textfont2 
        \dimen7=\dimen6
        \advance\dimen6by.5\proofrulebreadth
        \advance\dimen7by-.5\proofrulebreadth
        \setbox0=\hbox{%
                \kern\shortenproofleft
                \ifdoubleproof
                \then   \hbox to\dimen0{%
                        $\mathsurround0pt\mathord=\mkern-6mu%
                        \cleaders\hbox{$\mkern-2mu=\mkern-2mu$}\hfill
                        \mkern-6mu\mathord=$}%
                \else   \vrule height\dimen6 depth-\dimen7 width\dimen0
                \fi
                \unhbox\proofrulename}%
        \ht0=\dimen6 \dp0=-\dimen7
\fi
%
\let\doll\relax
\ifwasinsideprooftree
\then   \let\VBOX\vbox
\else   \ifmmode\else$\let\doll=$\fi
        \let\VBOX\vcenter
\fi
\VBOX   {\baselineskip\proofrulebaseline \lineskip.2ex
        \expandafter\lineskiplimit\ifproofdots0ex\else-0.6ex\fi
        \hbox   spread\dimen5   {\hfi\unhbox\proofabove\hfi}%
        \hbox{\box0}%
        \hbox   {\kern\dimen2 \box\proofbelow}}\doll%
%
\global\dimen2=\dimen2
\global\dimen3=\dimen3
\egroup 
\ifonleftofproofrule
\then   \shortenproofleft=\dimen2
\fi
\shortenproofright=\dimen3
%
\onleftofproofrulefalse
\ifinsideprooftree
\then   \hskip.5em plus 1fil \penalty2
\fi
}


\title{Compliance for reversible client/server interactions\footnote{
 This work was partially supported by EU Collaborative project  ASCENS 257414, ICT COST Action IC1201 BETTY, MIUR PRIN Project CINA Prot. 2010LHT4KM and Torino University/Compagnia San Paolo Project SALT. 
}}

\author{Franco Barbanera
\institute{Dipartimento di Matematica e Informatica\\
University of Catania}
\email{barba@dmi.unict.it}
 \and 
Mariangiola Dezani-Ciancaglini
\institute{Dipartimento di Informatica\\
University of Torino}
\email{dezani@di.unito.it}
 \and 
Ugo de'Liguoro
\institute{Dipartimento di Informatica\\
University of Torino}
\email{deliguoro@di.unito.it}
}



\newtheorem{definition}{Definition}[section]
\newtheorem{lemma}[definition]{Lemma}
\newtheorem{proposition}[definition]{Proposition}
\newtheorem{theorem}[definition]{Theorem}

\DeclareMathAlphabet{\mathpzc}{OT1}{pzc}{m}{it}

\newcommand{\Comment}[1]{ }

\newcommand{\ckpt}{_{\tiny\mbox{$\blacktriangle$}}\!}
\newcommand{\whiteckpt}{_{\tiny\mbox{$\triangle$}}\!}
\newcommand{\Rel}{\mathpzc R}

\newcommand{\FunH}{{\cal H}}







\newcommand{\ored}[1]{\stackrel{#1}{\longrightarrow}}      
 


\newcommand{\rlbk}{\sf rbk}

\newcommand{\red}{\longrightarrow}
\newcommand{\starred}{\stackrel{*}{\red}}


\newcommand{\comply}{\dashv}
\newcommand{\complyR}{\comply^{\mbox{\tiny $\blacktriangle$}}}
\newcommand{\complyF}{\comply^{\mbox{\tiny $\prec$}}}




\newcommand{\Dual}[1]{\overline{#1}}



\newcommand{\der}{\vartriangleright}




\newcommand{\CkptcomplHyp}{\mbox{\sc Hyp}}
\newcommand{\CkptcomplAx}{\mbox{\sc Ax}}



\newcommand{\Set}[1]{\{#1\}}

\renewcommand{\implies}{\Rightarrow}


\newcommand{\lts}[1]{\stackrel{#1}{\longrightarrow}}

\newcommand{\rec}{{\sf rec} \, }

\newcommand{\sumacts}[1]{{\mathcal A^+}(#1)}
\newcommand{\oplusacts}[1]{{\mathcal A^\oplus}(#1)}

\newcommand{\Names}{{\cal N}}
\newcommand{\CoNames}{\overline{\Names}}

\newcommand{\stopA}{{\bf 1}}




\newcommand{\Sbehav}{{\sf SB}}




\newcommand{\back}{\prec}

\newcommand{\sbcb}{\Sbehav_{\tiny\back}}
\newcommand{\np}[2]{#1\back#2}
\newcommand{\p}[2]{{\mathbf b}(#1,#2)}
\newcommand{\pp}{~\|~}

\newcommand{\sbc}{\Sbehav^{\blacktriangle}}
\newcommand{\s}{\Sbehav}


\newcommand{\getsea}{\mbox{\sf sea}}
\newcommand{\getmount}{\mbox{\sf mount}}
\newcommand{\gethouse}{\mbox{\sf house}}
\newcommand{\getbung}{\mbox{\sf bung}}
\newcommand{\getgarden}{\mbox{\sf garden}}

\newcommand{\sendsea}{\overline{\getsea}}
\newcommand{\sendmount}{\overline{\getmount}}
\newcommand{\sendhouse}{\overline{\gethouse}}

\newcommand{\sendgarden}{\overline{\getgarden}}


\setlength{\marginparwidth}{1in}

\iftrue

\usepackage{titlesec}
\titleformat{\section}[runin]
  {\bfseries}
  {\thesection}{1em}{}[.]
\titlespacing{\section}
  {0pt}
  {0.5em}
  {0.5em}
\fi

\begin{document}

\maketitle

\begin{abstract}
In the setting of {\em session behaviours}, we study an extension of the concept of compliance when  
a disciplined form of backtracking is present. After adding checkpoints to the syntax of session behaviours, we formalise the operational semantics via a LTS, and define a natural
notion of {\em checkpoint compliance}. We then obtain a co-inductive characterisation of such compliance relation, and an axiomatic 
presentation that is proved to be sound and complete. As a byproduct we get a decision procedure for the new compliance, being the axiomatic system algorithmic.
\end{abstract}





\section{Introduction}\label{sec:intro}

In human as well as  automatic negotiations, an interesting feature is the ability of rolling back to some previous point, undoing previous choices and  possibly trying a different path. {\em Rollbacks} are familiar to the users of web browsers, and so are also the troubles that these might cause during ``undisciplined'' interactions. Clicking the ``back'' button, or going to some previous point in the chronology when we are in the middle of a transaction, say the booking of a flight, can be as smart as dangerous. In any case it is surely a behaviour that service programmers want to discipline. Also the converse has to be treated with care: a server discovering that a service becomes available after having started a conversation could take advantage from some kind of rolling backs. However, such a server would be quite unfair if  the rollbacks were completely hidden from the client.

Adding rollbacks to interaction protocols requires a sophisticated concept of client/server compliance.
In this paper we investigate protocols admitting a simple, though non trivial form of reversibility in the framework 
of the theory of contracts introduced in~\cite{CCLP06} and developed in a series of papers, e.g.~\cite{CGP10}. 
We focus here on the scenario of client/server architectures, where services stored in a repository are queried by clients to establish two-sided communications, and the central concept is that of {\em compliance}. 

More precisely, we consider the formalism of {\em session behaviours} as introduced in~\cite{BdL13,BdL10,BH13}, 
but without delegation. This is a formalism interpreting the session  types, introduced by Honda et al. in~\cite{honda.vasconcelos.kubo:language-primitives},  into a subset of CCS without $\tau$.
We extend the session behaviours syntax by means of markers that we call {\em checkpoints}; these are intended
as pointers to the last place where either the client or the server can roll back at any time. We investigate which 
constraints must be imposed to obtain a safe notion of client/server interaction in the new scenario, by defining a model
in the form of a LTS, and by characterising the resulting concept of compliance both coinductively and axiomatically.
Since the axiomatic system is algorithmic that is decidable, the compliance of behaviours with checkpoints is decidable.

Before entering into the formal development of session behaviours with checkpoints, we illustrate the basic concepts by discussing a few examples. 
Suppose that the client is a customer willing to arrange for an holiday, while the server is the web service of a travel agency.
Let the action $\getsea$ represent the quest for a seaside accommodation and let $\getmount$ stands for the request of a settlement in the mountains. By $\gethouse$ we mean the request of a house, while $\getbung$ stands for the request of a bungalow. Dual actions represent offers, so that e.g. the co-action $\Dual{\getsea}$  signals availability of accommodations in a seaside and $\Dual{\gethouse}$ that a house can be booked. 

Suppose that the customer seeks a house or a bungalow at sea, but just a house in the mountains. Then the client behaviour, represented as a process algebraic term, is described by:
\vspace{-2mm}
\[ \rho = \getsea.(\gethouse+\getbung) \ + \ \getmount.\gethouse \vspace{-2mm}
\]
where dots are sequential compositions and  sums are external choices.
We say that a client $\rho$ is {\em compliant} with a server $\sigma$, written $\rho\comply\sigma$, if all client communication actions are matched by the dual actions on the server side. 
According to this the customer will be not compliant with a server behaving as: 
\vspace{-2mm}
\[\sigma = \Dual{\getmount}.(\Dual{\gethouse} \oplus \Dual{\getbung})\vspace{-2mm}\]
where $\oplus$ is internal choice. In fact the interaction represented by the parallel composition $\rho\, \| \, \sigma$, that evolves by synchronising corresponding actions and co-actions, might lead to $\gethouse \, \| \, \Dual{\getbung}$. This means that the customer is offered a bungalow in the mountains she is not willing to reserve.

Now consider the {\em dual} behaviour of $\rho$, dubbed $\Dual{\rho}$, which is obtained by exchanging actions by the respective co-actions, and external by internal choices. Then we get the server: 
\vspace{-2mm}
\[\Dual{\rho} = \Dual{\getsea}.(\Dual{\gethouse} \oplus \Dual{\getbung}) \ \oplus \ \Dual{\getmount}.\Dual{\gethouse}
\vspace{-2mm}\] 
and clearly we get $\rho \comply \Dual{\rho}$. In general we expect that $\rho\comply\Dual{\rho}$, or equivalently that $\Dual{\sigma}\comply\sigma$, since
duality is involutive.

Taking a further step, let us consider a server such that, after sending the offer $\Dual{\getsea}$ followed by $\Dual{\gethouse}$, might realise that a better offer is now available which can be issued by sending $\Dual{\getbung}$ instead of $\Dual{\gethouse}$; this can be achieved only by rolling back to the choice $\Dual{\gethouse} \oplus \Dual{\getbung}$. Rollback is however a new feature, that cannot be easily represented by usual process algebra operations~\cite{PU07}. 

To express rollback we then introduce the symbol `$\ckpt$' to mark the point where a session behaviour can backtrack to; we call such a marker a {\em checkpoint}. We suppose that a suitable mechanism keeps memory of the past, by recording the behaviour $\ckpt\sigma$ each time the checkpoint is traversed by synchronising on some action that $\sigma$ is ready to do. For simplicity we assume that only one ``past''  can be recorded at any time, so that a new memorisation destroys the old one, leading to a model in which  the client and  the server can backtrack just to the lastly traversed checkpoint. 

By adding some checkpoints to $\Dual{\rho}$ we get for example $\sigma' = \ \ckpt (\Dual{\getsea}.\ckpt (\Dual{\gethouse} \oplus \Dual{\getbung}) \ \oplus \ \Dual{\getmount}.\Dual{\gethouse})$. With respect to $\Dual{\rho}$ the new server can undo all of the internal choices, in order to keep the negotiation open as much as possible and to give to the client some better chance for booking a place, even in case it wasn't available at the beginning of the interaction. But how should the client be redesigned to interact properly? Unfortunately the most natural choice of taking the client as the dual\linebreak {$\Dual{\sigma'} = {\ckpt (\getsea.\ckpt (\gethouse +\getbung)} \,+\, \getmount.\gethouse)$} fails. 
In fact, writing $\lts{\sf fw}$ for the forward  step and $\lts{\sf rollbk}$ for the synchronous rollback, we have among the possible interactions between $\sigma'$ and $\Dual{\sigma'}$:
\[\begin{array}{lll}
&\multicolumn{2}{l}{
\ckpt (\getsea.\ckpt (\gethouse+\getbung) \ + \ \getmount.\gethouse) \ \| \ \ckpt (\Dual{\getsea}.\ckpt (\Dual{\gethouse} \oplus \Dual{\getbung}) \ \oplus \ \Dual{\getmount}.\Dual{\gethouse})} \\
\lts{\sf fw} & \ckpt (\getsea.\ckpt (\gethouse+\getbung) \ + \ \getmount.\gethouse) \ \| \ \Dual{\getsea}.\ckpt (\Dual{\gethouse} \oplus \Dual{\getbung}) & \mbox{\em internal choice} \\ 
\lts{\sf fw} &  \ckpt (\gethouse+\getbung) \ \| \ \ckpt (\Dual{\gethouse} \oplus \Dual{\getbung}) & \mbox{\em synchronising on $\getsea$ and $\Dual{\getsea}$} \\ 
\lts{\sf fw} &  \ckpt (\gethouse+\getbung) \ \| \ \Dual{\gethouse}  & \mbox{\em internal choice} \\ 
\lts{\sf rollbk} & \ckpt (\getsea.\ckpt (\gethouse+\getbung) \ + \ \getmount.\gethouse) \ \| \ \ckpt (\Dual{\gethouse} \oplus \Dual{\getbung}) & \mbox{\em rollback to the last traversed $\ckpt$}
\end{array}\]
which is now in a stuck state. The mismatch between external and internal choice is the effect of the asymmetry of the respective semantics in process algebra. The selection of a branch in an external choice is just one step; on the contrary the synchronisation on $\Dual{\getsea}$ in the second step above comes {\em after} the internal choice has occurred. This has consequences with respect to the backtracking, since the checkpoint alignment fails.

In~\cite{BdL10} it has been proved that the dual of a server is the minimum client that complies with the server with respect to a natural (and efficiently decidable) ordering, and vice versa the dual of a client is the minimum compliant server. This is an essential feature of the theory, since it is supposed to model a scenario in which clients look for servers through a network querying a service of a certain shape, that is easier to find if we know its minimal form. To express this precisely, let us write $\rho\complyR\sigma$ to denote the compliance of $\rho$ with $\sigma$ in a setting with backtracking, that we call {\em checkpoint compliance}; then we put the requirement that in the new theory the following holds:
\vspace{-2mm}
\begin{equation}\label{eq:dualRequirement}
\forall \rho.~~~~\rho\complyR\Dual{\rho}
\vspace{-2mm}
\end{equation}

For (\ref{eq:dualRequirement}) to hold we change the operational semantics of $\oplus$ by gluing the choice and the synchronisation over a co-action, that can be formalised by the rule:
\vspace{-2mm}
\[
	\Dual{a}.\sigma_1\oplus \sigma_2  \lts{\Dual{a}} \sigma_1
\vspace{-2mm}
\]
This has however the unpleasant consequence that $a \complyR \Dual{a} \oplus \Dual{b}$, while we have that  $a \not\comply \Dual{a} \oplus \Dual{b}$, where the compliance $\comply$ is defined according to the standard LTS~\cite{BdL10,BdL13,BH13}. In general, we expect the compliance of behaviours with rollback to be conservative with respect to the compliance without rollback:
\vspace{-2mm}
\begin{equation}\label{eq:consRequirement}
\forall \rho,\sigma.~~~~\rho\complyR \sigma ~~\implies~~ \textit{erase}(\rho) \comply \textit{erase}(\sigma)
\vspace{-2mm}
\end{equation}
where $\textit{erase}$ deletes all checkpoints. We will accomplish this by asking that any co-action has a corresponding action in reducing the parallel of internal and external choices.

The essence of this change is that rolling back has to be a synchronous action, and therefore it cannot be the effect of an internal choice, since the latter is unobservable. This is a general principle. Consider the interaction
\vspace{-2mm}
\[(\getsea.\gethouse.\getmount.\gethouse) \| (\sendsea.\ckpt \sendhouse. \sendmount.\sendhouse)\vspace{-2mm}\]
It is 
the pair of a client willing to book a house at sea {\em and} a house in the mountains, and a server that 
can succeed by renting twice a house at seaside! The point is that the client has no way to be aware of what happened and to react according to her own policy, which is instead the case if both are forced to backtrack at the same time. 
For this to be guaranteed we require that the client and the server either both can or both cannot rollback in all configurations. 


We finally observe that it is not necessarily the case that compliant behaviours show some correspondence between the respective checkpoints. For example it holds that:
\vspace{-2mm}
\[ \ckpt (\getsea.\gethouse.\getgarden + \gethouse. \getgarden) \ \complyR \ \ckpt (\sendsea.\ckpt \sendhouse.\sendgarden \oplus \sendhouse.\sendgarden)
\vspace{-2mm}\]
which makes sense, since the client $\ckpt (\getsea.\gethouse.\getgarden + \gethouse. \getgarden)$ is asking for a house with garden, either at sea or anywhere else.


\section{Calculus}

As explained in the Introduction, we allow checkpoints only before internal or external choices. Therefore we define session behaviours as in~\cite{BdL13,BH13} just adding checkpointed choices. 

\begin{definition}[Session Behaviours with Checkpoints]
\label{Adef:ckpt-behav}
Let $\Names$ be some countable set of symbols and $\CoNames = \Set{\Dual{a} \mid a \in \Names}$, with
$\Names\cap\CoNames = \emptyset$. 
The set $\s$ of {\bf session behaviours with checkpoints} is defined by the 
grammar of Figure~\ref{ssbc}, 
where $I$ is non-empty and finite, the names 
and  the conames 
in choices are pairwise distinct and $\sigma$ is not a variable in $\rec x.\sigma$.
\end{definition}
\begin{figure}
\[\begin{array}{lcl@{\hspace{4mm}}l}
\sigma,\rho&:=& \mid ~ \stopA &\mbox{success} \\[1mm]
       &     & \mid ~\sum_{i\in I} a_i.\sigma_i  & \mbox{external choice} \\[1mm]
       &     & \mid ~\ckpt\sum_{i\in I} a_i.\sigma_i& \mbox{checkpointed external choice}\\[1mm]
       &     & \mid ~\bigoplus_{i\in I} \Dual{a}_i.\sigma_i & \mbox{internal choice}\\[1mm]
       &     & \mid ~\ckpt\bigoplus_{i\in I} \Dual{a}_i.\sigma_i& \mbox{checkpointed internal choice}\\[1mm]
       &     & \mid  ~x  & \mbox{variable}\\[1mm]
       &     & \mid ~\rec x. \sigma &  \mbox{recursion}
\end{array}
\]
\caption{Syntax of session behaviours with checkpoints}\label{ssbc}
\end{figure}
\noindent
Note that recursion in $\s$ is guarded and hence contractive in the usual sense. We take an equi-recursive view of recursion by equating $\rec x.\sigma  $ with $\sigma[\rec x.\sigma/x]$. Hence there is no point in considering also terms of the shape $\ckpt\rec x.\sigma$.


Let us call just {\em behaviours} the expressions in $\s$. In the operational semantics of the calculus we have to record the last encountered behaviour 
$\gamma$ that was prefixed by a checkpoint in the interaction leading to $\sigma$. Therefore we will consider configurations of the shape:
\[\np\gamma\sigma\] 
In the starting configuration or just after a rollback has occurred,  there is no further point to which the behaviour might rollback, a situation we represent by writing $\np\circ\sigma$. Let $\sbc$ be the set of behaviours starting with $\ckpt\;$; then we ask $\gamma\in \sbc\cup\{\circ\}$, which is the set of the ``pasts'', and denote by $\gamma,\delta$, possibly with superscripts, its elements.
Then the LTS of clients and servers is formalised as follows.

\begin{definition}[Reduction of Session Behaviours]\label{scs}\hfill \\
\[\begin{array}{ccc}
\np\gamma{\sum_{i\in I} a_i.\sigma_i} \ored{a_i} \np\gamma{\sigma_i}~(i\in I)~(+) &\qquad&
\np\gamma{\bigoplus_{i\in I} \Dual{a}_i.\sigma_i} \ored{\Dual{a}_i} \np\gamma{\sigma_i} ~(i\in I)~(\oplus) \\[4mm]
\prooftree \np\gamma\sigma\ored{\alpha}\np\gamma{\sigma'}\quad\alpha\in\Names\cup\CoNames
\justifies \np\gamma{{\ckpt\sigma}}\ored{\alpha}\np{{\ckpt\sigma}}{\sigma'}
\using(\blacktriangle)
\endprooftree&&
\np\sigma{\sigma'}  \lts{\rlbk}  \np\circ\sigma~(\rlbk) 
\end{array}\]
\end{definition}

\smallskip

\noindent
Notice that Rule $(+)$ is the standard forward computation for external choice, but for the presence of the $\np\gamma{\cdot}$. Rule $(\oplus)$ glues into just one step both the internal choice and the communication of a coname, becoming very similar to the rule for external choice. The reduction of client/server parallel compositions (Definition~\ref{s} below) will be only possible when all internal choices can be matched by the corresponding external choices, which has the effect of saving the conservativity principle (\ref{eq:consRequirement}) of the Introduction. 
Rule $(\blacktriangle)$ says that in the presence of a checkpoint the forward reduction must update the  behaviour at which it is possible to rollback (in this case $\ckpt\sigma$). Rule $(\rlbk)$ implements the rollback: the previous past behaviour is erased, the behaviour prefixed by the last traversed checkpoint  becomes the new past behaviour and no further rollback is allowed in the new configuration.

\smallskip

When composing in parallel clients and servers we have to consider the different nature of the reductions for internal and external choices. To this aim it is handy to collect the sets of names and conames prefixing the choices, as done in the following definition. Notice that the resulting sets only contain names, since each coname is mapped to the corresponding name.
\begin{definition}[$\sumacts{\cdot}$, $\oplusacts{\cdot}$]
\label{Adef:sumoplusacts}
Let
\[\begin{array}{rclcrclcrcl}
\sumacts{\stopA}=\sumacts{\bigoplus_{i\in I} \Dual{a}_i.\sigma_i}&=&\emptyset &~~&
\sumacts{\sum_{i\in I} a_i.\sigma_i}&=&\{a_i\mid i\in I\}
&~~&\sumacts{\ckpt\sigma}&=&\sumacts{\sigma}\\
\multicolumn{1}{l}{\text{and}}\\
\oplusacts{\stopA}=\oplusacts{\sum_{i\in I} a_i.\sigma_i}&=&\emptyset &&
\oplusacts{\bigoplus_{i\in I} \Dual{a}_i.\sigma_i}&=&\{a_i\mid i\in I\}&&
\oplusacts{\ckpt\sigma}& = &\oplusacts{\sigma}
\end{array}\]
\end{definition}
The interaction of a client with a server is modelled by the reduction of their parallel composition, that can be either
forward, consisting of CCS style synchronisations, or backward, where both behaviours synchronously go back to the respective last traversed checkpointed behaviours. 
\begin{definition}[Communication Reduction of Client and Server Pairs]\label{s}\hfill \\
\[\begin{array}{c}
\prooftree
\np\delta\rho \ored{a} \np{\delta'}{\rho'} \qquad \np\gamma\sigma \ored{\Dual{a}} \np{\gamma'}{\sigma'} 
\qquad 
\oplusacts{\sigma}\subseteq \sumacts{\rho}
\justifies
\np\delta\rho\pp \np\gamma\sigma \ored{\tau} \np{\delta'}{\rho'}\pp \np{\gamma'}{\sigma'}
\endprooftree \\[8mm]
\prooftree
\np\delta\rho \ored{\Dual{a}} \np{\delta'}{\rho'} \qquad \np\gamma\sigma \ored{a} \np{\gamma'}{\sigma'} 
\qquad 
\oplusacts{\rho}\subseteq \sumacts{\sigma}
\justifies
\np\delta\rho\pp \np\gamma\sigma \ored{\tau} \np{\delta'}{\rho'}\pp \np{\gamma'}{\sigma'}
\endprooftree\\[8mm]
\prooftree
\np\rho{\rho'} \lts{\rlbk} \np\circ\rho \qquad \np\sigma{\sigma'} \lts{\rlbk} \np\circ\sigma 
\justifies
\np\rho{\rho'}\pp \np\sigma{\sigma'} \ored{\rlbk} \np\circ\rho\pp \np\circ\sigma
\endprooftree\\[1mm]
\end{array}\]
\end{definition}
\noindent
We denote by $\starred$ the reflexive and transitive closure of forward reductions.

It is easy to verify that if $\np\circ\rho\pp \np\circ\sigma \starred\np\circ\rho'\pp \np\circ\sigma'$, then $\rho\pp\sigma$ reduces to $\rho'\pp\sigma'$ in the calculi of~\cite{BdL13,BH13}, by splitting in two steps each application of rule $(\oplus)$. If  $\rho\pp\sigma$ reduces to $\rho'\pp\sigma'$ in the calculi of~\cite{BdL13,BH13} we can find $\rho''$, $\sigma''$ such that both $\rho'\pp\sigma'$ reduces to $\rho''\pp\sigma''$ and\\ \centerline{$\np\circ\rho\pp \np\circ\sigma \starred\np\circ\rho''\pp \np\circ\sigma''$.}
We take $\rho''\pp\sigma''$ as different than $\rho'\pp\sigma'$ only in case the last applied rule is an internal choice, which in rule $(\oplus)$ is fused with the communication of the coname. 

\smallskip

The last definition makes it clear that the characterisation of compliance in the present calculus requires some care, since the last checkpointed behaviours of clients and servers must be compliant. We formalise this intuition in the next section.


\section{Compliance}

The compliance relation of session behaviour calculi  requires that whenever there is no possible reduction, then all client requests and offers are satisfied, i.e. it is $\stopA$. In presence of backward computations we have also to 
require that the client and the server either both can or both cannot reverse to their last encountered checkpoints. This leads to the  following definition, in which the set of configurations is denoted by $\sbcb$, i.e.
$\sbcb=\{\np\gamma\sigma\mid
\gamma\in \sbc\cup\{\circ\},\sigma\in\s\}
$

\begin{definition}[Checkpoint Compliance Relation $\complyR$]\label{ccr}
\begin{enumerate}[i)]
\item \label{ccr1}
Let
$\FunH: {\cal P}(\sbcb\times\sbcb)\rightarrow  {\cal P}(\sbcb\times\sbcb)$ be such that, 
 for any  $\Rel\subseteq \sbcb\times\sbcb$, we get
$(\np\delta\rho,\np\gamma\sigma) \in {\cal H}(\Rel)$ if:
\begin{enumerate}[1)]
\item\label{c1} $\np\delta\rho\pp \np\gamma\sigma\not\!\!\ored{\tau}$ implies  $\rho=\stopA$ and either $ \delta=\gamma=\circ$ or $\delta,\gamma\in\sbc$;
\item\label{c2} $\np\delta\rho\pp \np\gamma\sigma \ored{\beta} \np{\delta'}{\rho'}\pp \np{\gamma'}{\sigma'}$ implies $\np{\delta'}{\rho'}\;\Rel \;\np{\gamma'}{\sigma'}$,~~where $\beta\in \{\tau,\rlbk\}$.
\end{enumerate}
\item \label{ccr2} 
A relation $\Rel\subseteq \sbcb\times\sbcb$ is a
{\em checkpoint compliance relation\/} if 
$\Rel\subseteq \FunH(\Rel)$. 
The relation $\complyR$ is the greatest solution of the equation $X=\FunH(X)$:\\ \centerline{$\complyR~=~\nu\FunH$}
\item \label{ccr3}
We say that $\rho$ is {\em checkpoint compliant} with $\sigma$ (notation $\rho\complyR \sigma$) if ~
$\np\circ\rho\complyR \np\circ\sigma$.
\end{enumerate}
\end{definition}
Roughly, when $\rho\complyR \sigma$ holds, $\rho$ and $\sigma$ are compliant in the standard sense and they keep on being so after any possible synchronous rollback that can occur during a standard interaction. Moreover it can never be the case that one of them can perform
a rollback and the other one cannot, also when $\rho$ is in the success configuration.

\smallskip

It is easy to verify that Definition \ref{ccr}(\ref{ccr3}) satisfies the requirements (\ref{eq:dualRequirement}) and (\ref{eq:consRequirement}) discussed in the Introduction.
Namely that each session behaviour is checkpoint compliant with its dual, and that
if a client and a server are checkpoint compliant, then the client and the server obtained by erasing the checkpoints are compliant. More formally, if the $\textit{erase}(\cdot)$ mapping deletes all checkpoints:
\begin{proposition}
\begin{enumerate}
\item ~~~~$\forall \rho.~~~~\rho\complyR\Dual{\rho}$.
\item ~~~~$\forall \rho,\sigma.~~~~\rho\complyR \sigma ~~\implies~~ \textit{erase}(\rho) \comply \textit{erase}(\sigma)$.
\end{enumerate}
\end{proposition}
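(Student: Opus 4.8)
\emph{Part~1.} The idea is to exhibit a single checkpoint compliance relation containing every pair $(\np\circ\rho,\np\circ{\Dual\rho})$ and then to invoke $\complyR=\nu\FunH$. Setting $\Dual\circ\ByDef\circ$ and extending duality to configurations componentwise (note that it preserves a leading checkpoint, hence maps $\sbc\cup\{\circ\}$ into itself), I would take
\[
\Rel \ByDef \bigl\{\, (\np\delta\rho,\ \np{\Dual\delta}{\Dual\rho}) \mid \np\delta\rho\in\sbcb \,\bigr\}
\]
and check $\Rel\subseteq\FunH(\Rel)$; since $(\np\circ\rho,\np\circ{\Dual\rho})\in\Rel$ this yields $\rho\complyR\Dual\rho$. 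The only computation needed is the routine identity $\oplusacts{\Dual\rho}=\sumacts{\rho}$ (induction on $\rho$, using $\oplusacts{\ckpt\tau}=\oplusacts{\tau}$, modulo unfolding of recursion), from which the side conditions of Definition~\ref{s} are met — with equality — for every pair of $\Rel$, and, moreover, a configuration $\np\delta\rho$ with $\rho\neq\stopA$ always has a labelled transition that is matched by the dual labelled transition of $\np{\Dual\delta}{\Dual\rho}$. Hence: condition~\ref{c1} holds, because if $\np\delta\rho\pp\np{\Dual\delta}{\Dual\rho}$ cannot do a $\ored{\tau}$-step then $\rho=\stopA$, so $\Dual\rho=\stopA$ and $\Dual\delta$ has a leading checkpoint exactly when $\delta$ does; condition~\ref{c2} holds, because inspecting the three rules of Definition~\ref{s} shows that every $\ored{\tau}$-step stays inside $\Rel$ (the internal-choice side selects a branch $k$, the dual external-choice side is forced to synchronise on the same branch $k$, and when the choices carry a checkpoint rule~$(\blacktriangle)$ replaces the two pasts by $\ckpt$ of two mutually dual behaviours) and so does every $\ored{\rlbk}$-step (both sides jump back to $\np\circ\delta$ and $\np\circ{\Dual\delta}$).

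\emph{Part~2.} I would argue by contraposition: assume $\textit{erase}(\rho)\not\comply\textit{erase}(\sigma)$, so the standard calculus of~\cite{BdL13,BH13} admits a reduction $\textit{erase}(\rho)\pp\textit{erase}(\sigma)\starred P\pp Q$ with $P\pp Q$ stuck and $P\neq\stopA$; I must deduce $\rho\not\complyR\sigma$. The correspondence between the two reduction relations recorded after Definition~\ref{s} carries over once checkpoints are erased, since rule~$(\blacktriangle)$ only bookkeeps the past component and never blocks a forward step. Mirroring the standard run, one obtains a forward run $\np\circ\rho\pp\np\circ\sigma\starred\np{\delta'}{\rho'}\pp\np{\gamma'}{\sigma'}$ ending at the last configuration whose erasure the checkpoint LTS can reproduce: either this erasure is $P\pp Q$, or it is the standard state just preceding the internal-choice resolution that created the deadlock. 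In the first case any $\ored{\tau}$-step of $\np{\delta'}{\rho'}\pp\np{\gamma'}{\sigma'}$ would project onto a standard step out of $P\pp Q$, impossible; in the second case the pending internal choice is some $(\ckpt)\bigoplus_{i\in I}\Dual a_i.(\cdot)$ one of whose branch names $a_k$ is, precisely because $P\pp Q$ deadlocks, missing from the opposing external choice, so $\oplusacts{\cdot}\not\subseteq\sumacts{\cdot}$, rule~$(\oplus)$ is blocked, and again no $\ored{\tau}$-step is possible (a $\tau$-step always pits an internal choice against an external one). Either way $\np{\delta'}{\rho'}\pp\np{\gamma'}{\sigma'}$ cannot do a $\ored{\tau}$-step, while $\rho'\neq\stopA$ since $\textit{erase}(\rho')\neq\stopA$. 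But if $\np\circ\rho\complyR\np\circ\sigma$, repeated use of condition~\ref{c2} along the forward run gives $\np{\delta'}{\rho'}\complyR\np{\gamma'}{\sigma'}$, whence condition~\ref{c1} forces $\rho'=\stopA$ — a contradiction. Therefore $\rho\not\complyR\sigma$.

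\emph{Main obstacle.} The delicate part is entirely in Part~2, and it is exactly the clash between the two treatments of internal choice: the checkpoint LTS fuses the choice with the communication of the coname and guards it by $\oplusacts{\cdot}\subseteq\sumacts{\cdot}$, whereas the standard LTS performs these as two independent, unguarded steps. One must verify that the extra standard step never conceals a genuine deadlock — equivalently, that a standard deadlock caused by an internal branch the partner cannot follow is always reflected, one step earlier, by a failure of that side condition — and one must make the ``mirroring'' of reductions precise, carrying everything out modulo unfolding of recursion. Part~1, by comparison, is a direct coinductive check.
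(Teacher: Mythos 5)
The paper itself gives no proof of this Proposition (it is stated right after the remark that it ``is easy to verify''), so there is no official argument to compare yours with; judged on its own, your proposal is essentially right. Part~1 is the natural coinductive argument and is correct as written: with $\Dual{\circ}=\circ$ and duality preserving leading checkpoints, the relation $\Rel=\{(\np\delta\rho,\np{\Dual{\delta}}{\Dual{\rho}})\}$ satisfies $\Rel\subseteq\FunH(\Rel)$ because $\oplusacts{\Dual{\rho}}=\sumacts{\rho}$ makes the side conditions of the communication rules hold with equality, forward steps replace the two pasts by dual behaviours, $\rlbk$-steps send the pair to $(\np\circ\delta,\np\circ{\Dual{\delta}})$, and the two pasts are simultaneously $\circ$ or simultaneously in $\sbc$, which is exactly what condition~1 needs when $\rho=\stopA$.

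In Part~2 the strategy (contraposition plus mirroring the checkpoint-free run by fused steps, then contradicting conditions~2 and~1 of Definition~\ref{ccr}) is sound, but your case split on where the mirrored run stops is not exhaustive, and the justification you give in the second case is misattributed. The side condition of rule $(\oplus)$ can already fail at an \emph{intermediate} communication of the standard run, far from the eventual deadlock: with client $a.\rho_1$ and server $\Dual{a}.\sigma_1\oplus\Dual{b}.\sigma_2$ the standard LTS resolves the choice, synchronises on $a$ and continues, while the checkpoint LTS is blocked at the very first step since $\{a,b\}\not\subseteq\{a\}$; here the offending branch $b$ has nothing to do with why $P\pp Q$ later deadlocks, contrary to your ``precisely because $P\pp Q$ deadlocks''. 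Moreover, when the standard deadlock leaves pending internal choices on \emph{both} sides, the stopping configuration pits two internal choices against each other, so there is no ``opposing external choice'' at all (your parenthetical remark that a $\tau$-step always pits an internal against an external choice is what actually closes this case). The repair is straightforward: stop the mirrored run at the first configuration with no $\ored{\tau}$-step; at such a point each side is either $\stopA$, an external choice, or an internal choice one of whose branches the other side cannot match, and since the client's erasure is a client state of the standard run it cannot be $\stopA$ (otherwise $P=\stopA$). Then, as you say, repeated use of condition~2 along the run gives $\np{\delta'}{\rho'}\complyR\np{\gamma'}{\sigma'}$ and condition~1 yields the contradiction. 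So the defect is bookkeeping in the case analysis --- which you partly flag yourself as the main obstacle --- not a flaw of method.
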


\noindent
In the following we will use the notation $\whiteckpt\sigma$ to represent ambiguously $\sigma$ and $\ckpt\sigma$. \\
In order to give a formal system characterising checkpoint compliance it is handy to define a function \linebreak ${\mathbf b}:\sbc\cup\{\circ\}\times\s\rightarrow\sbc\cup\{\circ\}$ which returns the second argument when it is checkpointed, and the first argument otherwise.  
Formally:
\vspace{-3mm}
\[\p\gamma{{\whiteckpt\sigma}}=\begin{cases}
 \ckpt\sigma     & \text{if }{\footnotesize\mbox{$\triangle$}}= \ \blacktriangle \\
 \gamma     & \text{otherwise.}
\end{cases}
\vspace{-1mm}\]

\noindent
Forward reduction in Definition~\ref{s} can be shortly written in terms of the function ${\mathbf b}$:

\begin{lemma}\label{simple}\hfill\\
\centerline{$\begin{array}{c}
\np\gamma{{\whiteckpt(\sum_{i\in I} a_i.{\sigma}_i)}}\ored{a_k}\np{\p\gamma{{\whiteckpt(\sum_{i\in I} a_i.{\sigma}_i)}}}{\sigma_i}.\\
\np\gamma{{\whiteckpt(\bigoplus_{i\in I} \Dual{a}_i.{\sigma}_i)}}\ored{\Dual a_k}\np{\p\gamma{{\whiteckpt(\bigoplus_{i\in I} \Dual{a}_i.{\sigma}_i)}}}{\sigma_i}.
\end{array}$}

\end{lemma}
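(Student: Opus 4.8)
The plan is to prove both equations by a single case distinction, namely on which of the two forms the ambiguous notation $\whiteckpt(\cdot)$ stands for. I will detail the external‑choice equation; the internal‑choice one is obtained by the same argument, with rule $(\oplus)$ of Definition~\ref{scs} replacing rule $(+)$ and the coname $\Dual a_k\in\CoNames$ replacing the name $a_k\in\Names$. No induction is needed, since the term under analysis is a single (possibly checkpointed) choice and the relevant reductions are produced by at most one instance of rule $(\blacktriangle)$ on top of a base rule.

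First I would treat the case where $\whiteckpt(\sum_{i\in I}a_i.\sigma_i)$ denotes $\sum_{i\in I}a_i.\sigma_i$, i.e. there is no checkpoint. Here rule $(+)$ applies verbatim and gives $\np\gamma{\sum_{i\in I}a_i.\sigma_i}\ored{a_k}\np\gamma{\sigma_k}$ for every $k\in I$; and since the marker is white, the defining clause of $\mathbf b$ gives $\p\gamma{\sum_{i\in I}a_i.\sigma_i}=\gamma$, so $\np{\p\gamma{\whiteckpt(\sum_{i\in I}a_i.\sigma_i)}}{\sigma_k}$ is literally $\np\gamma{\sigma_k}$, as required. Second, in the case where $\whiteckpt(\sum_{i\in I}a_i.\sigma_i)$ denotes $\ckpt\sum_{i\in I}a_i.\sigma_i$, I would feed the reduction just obtained, $\np\gamma{\sum_{i\in I}a_i.\sigma_i}\ored{a_k}\np\gamma{\sigma_k}$, together with the side condition $a_k\in\Names\subseteq\Names\cup\CoNames$, into rule $(\blacktriangle)$, which yields $\np\gamma{\ckpt\sum_{i\in I}a_i.\sigma_i}\ored{a_k}\np{\ckpt\sum_{i\in I}a_i.\sigma_i}{\sigma_k}$. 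Since the marker is now black, $\p\gamma{\ckpt\sum_{i\in I}a_i.\sigma_i}=\ckpt\sum_{i\in I}a_i.\sigma_i$, so again the target coincides with $\np{\p\gamma{\whiteckpt(\sum_{i\in I}a_i.\sigma_i)}}{\sigma_k}$. One also checks $\ckpt\sum_{i\in I}a_i.\sigma_i\in\sbc$, so all configurations involved lie in $\sbcb$.

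I do not expect any genuine obstacle: the lemma is simply rules $(+)$, $(\oplus)$ and $(\blacktriangle)$ rewritten once the two clauses in the definition of $\mathbf b$ are unfolded. The only point that will require (minimal) attention is supplying the side condition of rule $(\blacktriangle)$, and this is immediate, the traversed action being a name in the external‑choice equation and a coname in the internal‑choice equation.
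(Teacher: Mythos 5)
Your proof is correct and is exactly the routine verification the paper leaves implicit (the lemma is stated without proof as an immediate reformulation of the reduction rules): a case split on whether the ambiguous marker $\whiteckpt$ is absent or present, using rule $(+)$ (resp.\ $(\oplus)$) directly in the first case and rule $(\blacktriangle)$ on top of it in the second, then unfolding the two clauses of $\mathbf b$. The only further remark is that the index mismatch ($a_k$ versus $\sigma_i$) in the lemma's statement is a typo in the paper, and your reading with $\sigma_k$ is the intended one.
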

\bigskip
We now axiomatically characterise the checkpoint compliance relation by means of a formal system, whose judgments are of the form $\Gamma\der \np\delta\rho \complyF \np\gamma\sigma$, where $\Gamma$ is an environment, i.e. a finite set $\Gamma=\{\np{\delta_i}{\rho_i} \complyF \np{\gamma_i}{\sigma_i} \}_{i\in I}$.
The rules of the formal system are given in Figure~\ref{fig:complianceSystem}, where in writing $\np\gamma\delta$ we assume that  $\delta\in\s$. We denote by $\complyF$ the formal counterpart of $\complyR$.  
We are now in place to establish the soundness and completeness of the formal system in Figure~\ref{fig:complianceSystem}.

\smallskip

\begin{theorem}[Soundness]\label{sound}\hfill\\
If ~$\Gamma\der \np\delta\rho\complyF\np\gamma\sigma$ and $\np{\delta'}{\rho'}\complyR \np{\gamma'}{\sigma'}$ for all $\np{\delta'}{\rho'}\complyF \np{\gamma'}{\sigma'}\in\Gamma$, then ~$\np\delta\rho\complyR\np\gamma\sigma$.
\end{theorem}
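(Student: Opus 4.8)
The plan is to use the coinduction principle for $\complyR=\nu\FunH$: to conclude $\np\delta\rho\complyR\np\gamma\sigma$ it suffices to exhibit a checkpoint compliance relation in the sense of Definition~\ref{ccr}(\ref{ccr2}), i.e.\ some $\Rel\subseteq\sbcb\times\sbcb$ with $\Rel\subseteq\FunH(\Rel)$, that contains the pair $(\np\delta\rho,\np\gamma\sigma)$. So I would first fix one derivation $\Deriv$ of $\Gamma\der\np\delta\rho\complyF\np\gamma\sigma$ and take as candidate the relation $\Rel$ consisting of $\complyR$ together with every configuration pair that occurs as the conclusion of a node of $\Deriv$. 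The root of $\Deriv$ contributes the target pair, so membership is immediate, and the whole problem reduces to checking $\Rel\subseteq\FunH(\Rel)$. Pairs of $\Rel$ that come from $\complyR$ are discharged at once, since $\complyR=\FunH(\complyR)\subseteq\FunH(\Rel)$ by monotonicity of $\FunH$, so the work is confined to the node conclusions.

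The technical heart is an elementary bookkeeping observation about $\Deriv$: at every node with conclusion $\Gamma''\der J''$, each judgment in the local environment $\Gamma''$ is either one of the original hypotheses of $\Gamma$ — whose configurations lie in $\complyR\subseteq\Rel$ by the assumption of the theorem — or was pushed onto the environment by a rule application strictly closer to the root, hence it is the conclusion of an ancestor node and its configurations again lie in $\Rel$ by construction. Granting this, I would verify clauses~(\ref{c1}) and~(\ref{c2}) of Definition~\ref{ccr} for each node conclusion by cases on its last rule, reading off the $\tau$- and rollback transitions of the parallel composition from Lemma~\ref{simple} and Definition~\ref{s}. For the axiom rule the side condition gives $\rho''=\stopA$ and the required alignment of the two pasts, no $\tau$-step is possible so~(\ref{c1}) holds, and~(\ref{c2}) concerns at most the synchronous rollback, whose target is either absent or is a premise of the rule. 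For a choice-compliance rule one side is an internal choice and the other an external choice offering the matching names (the premise carrying exactly the guard $\oplusacts{\cdot}\subseteq\sumacts{\cdot}$ of Definition~\ref{s}), hence a $\tau$-step always exists and~(\ref{c1}) is vacuous, while by Lemma~\ref{simple} every $\tau$-reduct — and, when both pasts are checkpointed, the $\rlbk$-reduct — is precisely the configuration pair of one of the rule's premises, hence lies in $\Rel$, which gives~(\ref{c2}). Unfolding rules are transparent, since $\rec x.\sigma$ and $\sigma[\rec x.\sigma/x]$ denote the same configuration under the equi-recursive convention, so the claim is inherited from the premise.

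The step I expect to be the main obstacle is the hypothesis rule, i.e.\ making the coinduction and the environment mechanism cohere. A node closed by the hypothesis rule asserts a judgment already in its environment, so by the bookkeeping observation its configurations are either in $\complyR$ — hence in $\FunH(\complyR)\subseteq\FunH(\Rel)$ — or are the conclusion of a non-hypothesis ancestor node, for which~(\ref{c1}) and~(\ref{c2}) with respect to $\Rel$ were established directly in the previous cases. The delicate point is to see that this is not circular: the non-hypothesis cases are settled using only that the premise configurations belong to $\Rel$, which holds by the very definition of $\Rel$ and does not invoke any hypothesis node, so the dependence of hypothesis nodes on their ancestors is well-founded. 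A secondary point needing care is matching the rollback clauses of Definition~\ref{ccr} with the way the formal system treats checkpointed configurations — the update of the recorded past via the function ${\mathbf b}$, and the premise governing $\np\circ\delta\complyF\np\circ\gamma$ in the checkpointed-choice rules when both pasts are themselves checkpointed behaviours — so that the reducts appearing in~(\ref{c2}) coincide exactly with the premises of the applied rule. Once all node conclusions are shown to lie in $\FunH(\Rel)$, we get $\Rel\subseteq\FunH(\Rel)$, hence $\Rel\subseteq\complyR$, and therefore $\np\delta\rho\complyR\np\gamma\sigma$.
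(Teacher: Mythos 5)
Your proposal is correct, and it is worth noting that it is organised differently from the paper's argument. The paper proves soundness ``by induction on derivations'': the cases $(\CkptcomplAx)$, $(+\cdot\oplus)$, $(\oplus\cdot+)$ are analysed exactly as you do (no $\tau$-step and past alignment for the axiom, Lemma~\ref{simple} for the $\tau$-reducts, the premise $\np\circ\delta\complyF\np\circ\gamma$ for the synchronous $\rlbk$-reduct), the case $(\CkptcomplHyp)$ is dismissed as trivial, and the circularity created by the rules that push the conclusion $\np\delta\rho\complyF\np\gamma\sigma$ into the environment $\Gamma'$ is discharged only by the brief remark that the premises yield $\complyR$ ``by induction and since $\complyR$ is the greatest fix point''. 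You instead run a genuine coinduction: you harvest from a fixed derivation $\Deriv$ the set of configuration pairs at its node conclusions, adjoin $\complyR$, and verify that this relation $\Rel$ is a post-fixed point of $\FunH$, with the bookkeeping observation that every environment entry is either an original hypothesis of $\Gamma$ (hence in $\complyR$ by assumption) or the conclusion of an ancestor node (hence in $\Rel$), and with the explicit check that the $(\CkptcomplHyp)$ case is not circular because the non-hypothesis cases use only membership of the premise pairs in $\Rel$. This buys a fully rigorous treatment of precisely the two points the paper's sketch glosses over --- the self-referential environment extension in $(+\cdot\oplus)$ and $(\oplus\cdot+)$, and why $(\CkptcomplHyp)$ is sound --- at the cost of a slightly heavier setup; the per-rule verifications (condition~(\ref{c1}) vacuous or given by the axiom's side condition, condition~(\ref{c2}) matched reduct-by-reduct against the premises via Lemma~\ref{simple} and the ${\mathbf b}$ update) coincide with the paper's. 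The only cosmetic slip is your mention of ``unfolding rules'': the system of Figure~\ref{fig:complianceSystem} has none, recursion being handled equi-recursively, so that case simply does not arise.
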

\begin{proof}{\em (Sketch)} By induction on derivations. If the last applied rule is $(\CkptcomplHyp)$ it is trivial. 

If the last applied rule is $(\CkptcomplAx)$ and $\delta=\gamma=\circ$, then condition~(\ref{c1}) of Definition~\ref{ccr} is satisfied and condition~(\ref{c2}) of Definition~\ref{ccr} is trivially satisfied, since there are no reductions.

If the last applied rule is $(\CkptcomplAx)$ and $\Gamma\der \np\circ\delta\complyF\np\circ\gamma$ we get $\delta,\gamma\in\s$, which implies $\delta,\gamma\in\sbc$
by construction, so condition~(\ref{c1}) of Definition~\ref{ccr} is satisfied. In this case the only possible reduction is $\np\delta\stopA \pp \np\gamma\sigma\ored{\rlbk}
\np\circ\delta\pp\np\circ\gamma$. The premise $\Gamma\der \np\circ\delta\complyF\np\circ\gamma$ implies by induction $\np\circ\delta\complyR\np\circ\gamma$, so also condition~(\ref{c2}) of Definition~\ref{ccr} is satisfied. 

If the last applied rule is $( + \cdot \oplus)$, then condition~(\ref{c1}) of Definition~\ref{ccr} is trivially satisfied. In this case by Lemma~\ref{simple} \mbox{$\np\delta\rho\pp\np\gamma\sigma\ored{\tau} \np{\p\delta\rho}{\rho_j}\pp\np{\p\gamma\sigma}{\sigma_j}$} for all $j\in J$. 
The premise\\ \centerline{$\Gamma'\der \np{\p\delta\rho}{\rho_j}\complyF \np{\p\gamma\sigma}{\sigma_j}$} gives $\np{\p\delta\rho}{\rho_j}\complyR\np{\p\gamma\sigma}{\sigma_j}$ by induction and since $\complyR$ is the greatest fix point. If $\delta=\gamma=\circ$ there is no rollback, otherwise\\ \centerline{$\np\delta\rho\pp\np\gamma\sigma\ored{\rlbk}
\np\circ\delta\pp\np\circ\gamma$.} The premise $\Gamma'\der \np\circ\delta\complyF\np\circ\gamma$ implies by induction $\np\circ\delta\complyR\np\circ\gamma$, so also condition~(\ref{c2}) of Definition~\ref{ccr} is satisfied. 
The proof for rule $(\oplus \cdot+)$ is similar.
\end{proof}
\begin{figure}[t]
\begin{center}
\line(1,0){450}
\end{center}
\[
\begin{array}{c@{\hspace{6mm}}c
}
\prooftree \text{ either } \delta=\gamma=\circ \text{ or }
\Gamma\der \np\circ\delta\complyF\np\circ\gamma
\justifies
\Gamma\der \np\delta\stopA \complyF \np\gamma\sigma
\using(\CkptcomplAx
)
\endprooftree &
\prooftree
\justifies
\Gamma, \np\delta\rho\complyF\np\gamma\sigma \der \np\delta\rho\complyF\np\gamma\sigma 
\using(\CkptcomplHyp)
\endprooftree
\end{array}
\]

\vspace{6mm}

$$
\begin{array}{c}
\prooftree  
    \forall j\in J.~ \Gamma'\der \np{\p\delta\rho}{\rho_j}\complyF\np{\p\gamma\sigma}{\sigma_j}
   \qquad \text{ either } \delta=\gamma=\circ \text{ or }\Gamma'\der  \np\circ\delta\complyF\np\circ\gamma
\justifies
    \Gamma\der \np\delta\rho\complyF\np\gamma\sigma
\using( + \cdot \oplus)
\endprooftree 
\\[8mm]
\multicolumn{1}{l}
{\mbox{ where }\Gamma' = \Gamma,\; \np\delta\rho\complyF\np\gamma\sigma \mbox{ and } \rho = {\whiteckpt_1(\sum_{i\in I\cup J} a_i.{\rho}_i)} 
\mbox{ and } \sigma = {\whiteckpt_2(\bigoplus_{j\in J} \Dual{a}_j.{\sigma}_j)}}\\
\\[10mm]
\prooftree 
    \forall i\in I.~ \Gamma'\der \np{\p\delta\rho}{\rho_i}\complyF\np{\p\gamma\sigma}{\sigma_i}
   \qquad \text{ either } \delta=\gamma=\circ \text{ or }\Gamma'\der  \np\circ\delta\complyF\np\circ\gamma\justifies
    \Gamma\der \np\delta\rho\complyF\np\gamma\sigma
\using(  \oplus \cdot+)
\endprooftree 
\\[8mm]
\multicolumn{1}{l}{\mbox{ where }\Gamma' = \Gamma,\; \np\delta\rho\complyF\np\gamma\sigma \mbox{ and } 
\rho = {\whiteckpt_1(\bigoplus_{i\in I} a_i.{\rho}_i)} 
\mbox{ and } \sigma = {\whiteckpt_2(\sum_{j\in I\cup J} \Dual{a}_j.{\sigma}_j)}}
\end{array}
$$

\caption{The formal system for checkpoint compliance}
\label{fig:complianceSystem}
\begin{center}
\line(1,0){450}
\end{center}
\end{figure}

\begin{theorem}[Completeness] \label{com}\hfill\\
If ~$\np\delta\rho\complyR\np\gamma\sigma$ and $\np{\delta'}{\rho'}\complyR \np{\gamma'}{\sigma'}$ for all $\np{\delta'}{\rho'}\complyF \np{\gamma'}{\sigma'}\in\Gamma$, ~then~ $\Gamma\der \np\delta\rho\complyF\np\gamma\sigma$.
\end{theorem}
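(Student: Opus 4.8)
The plan is to prove Theorem~\ref{com} as the converse of Theorem~\ref{sound}: from a semantic witness that $\np\delta\rho\complyR\np\gamma\sigma$ I build a finite derivation of $\Gamma\der\np\delta\rho\complyF\np\gamma\sigma$, reading the rules of Figure~\ref{fig:complianceSystem} bottom-up as a proof-search procedure. The derivation is constructed recursively, maintaining the invariant that the conclusion pair and every pair of $\Gamma$ belong to $\complyR$. If $\np\delta\rho\complyF\np\gamma\sigma\in\Gamma$, the branch is closed with $(\CkptcomplHyp)$. Otherwise, since $\complyR=\nu\FunH$ satisfies $\complyR=\FunH(\complyR)$, the pair $(\np\delta\rho,\np\gamma\sigma)$ enjoys conditions~(\ref{c1}) and~(\ref{c2}) of Definition~\ref{ccr}. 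If $\rho=\stopA$ I apply $(\CkptcomplAx)$: either $\delta=\gamma=\circ$ and the branch closes, or, since $\np\delta\stopA\pp\np\gamma\sigma$ has no $\tau$-reduction, condition~(\ref{c1}) forces $\delta,\gamma\in\sbc$, whence $\np\delta\stopA\pp\np\gamma\sigma\ored\rlbk\np\circ\delta\pp\np\circ\gamma$ and I recurse on the premise $\np\circ\delta\complyF\np\circ\gamma$ (with the same $\Gamma$). If $\rho\neq\stopA$, condition~(\ref{c1}) forces a $\tau$-reduction of $\np\delta\rho\pp\np\gamma\sigma$; inspecting Definition~\ref{s}, and unfolding any leading recursion by the equi-recursive convention, such a reduction exists only when $\rho$ and $\sigma$ have complementary shapes---one a (possibly checkpointed) external choice, the other a (possibly checkpointed) internal choice---with $\oplusacts\sigma\subseteq\sumacts\rho$ (respectively $\oplusacts\rho\subseteq\sumacts\sigma$); hence exactly one of $(+\cdot\oplus)$, $(\oplus\cdot+)$ applies, and I apply it and recurse on its premises with $\Gamma'=\Gamma,\np\delta\rho\complyF\np\gamma\sigma$.

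Next I check that each recursive call again meets the invariant. The environment $\Gamma'$ is fine, its only new member being the current pair, assumed in $\complyR$. For the premise pairs: by Lemma~\ref{simple} together with Definition~\ref{s}, the configurations $\np{\p\delta\rho}{\rho_j}\pp\np{\p\gamma\sigma}{\sigma_j}$ occurring in the premises of $(+\cdot\oplus)$ (respectively the $\rho_i,\sigma_i$ for $(\oplus\cdot+)$) are precisely the $\tau$-reducts of $\np\delta\rho\pp\np\gamma\sigma$, and $\np\circ\delta\pp\np\circ\gamma$ is its $\rlbk$-reduct when $\delta,\gamma\neq\circ$; condition~(\ref{c2}) of Definition~\ref{ccr} then gives that the corresponding pairs lie in $\complyR$. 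Along the way one also verifies, by propagating conditions~(\ref{c1})--(\ref{c2}), that the side conditions of the applied rule hold---in particular that in the rollback premise $\delta$ and $\gamma$ are either both $\circ$ or both in $\sbc$, so that the judgment $\np\circ\delta\complyF\np\circ\gamma$ is well-formed.

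Finally, and this is the crux, the search must terminate, i.e.\ produce a finite derivation. The key observation is that only finitely many configurations are ever visited. A session behaviour denotes a regular tree---by guardedness, closing a behaviour under unfolding yields finitely many subterms up to $\equiv$---so $\rho$ and $\sigma$ have finitely many ``unfolded subterms''; moreover every clause of Definition~\ref{scs} keeps the behaviour component inside the unfolded subterms of $\rho,\sigma$ and the past component inside $\{\circ\}$ together with the checkpointed ones among them (rule $(\blacktriangle)$ stores such a checkpointed subterm, rule $(\rlbk)$ reinstates it, and rules $(+)$, $(\oplus)$ descend into immediate subterms). Hence the set $\mathcal K$ of configurations reachable from $\np\delta\rho\pp\np\gamma\sigma$ by the reductions of Definition~\ref{s} is finite, and along any branch of the search every pair accumulated beyond the initial $\Gamma$ lies in $\mathcal K\times\mathcal K$; since $\Gamma$ strictly grows at each proper rule application, after at most $\lvert\mathcal K\rvert^2$ such applications the current pair recurs and the branch closes by $(\CkptcomplHyp)$. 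Each rule has finitely many premises, as $I$ and $J$ are finite, so the derivation tree has bounded depth and finite branching, hence is finite. I expect the main obstacle to be making ``finitely many reachable configurations'' rigorous in the equi-recursive calculus---pinning down the right finite closure $\mathcal K$ and checking that every clause of Definitions~\ref{scs} and~\ref{s} respects it---after which the argument is a structural recursion steered by conditions~(\ref{c1})--(\ref{c2}) of $\FunH$.
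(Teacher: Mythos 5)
Your construction is essentially the paper's own argument read bottom-up: stuck configurations are discharged by $(\CkptcomplAx)$ using condition~(\ref{c1}) of Definition~\ref{ccr}, $\tau$-reducible ones are decomposed by shape and handled by $( + \cdot \oplus)$ or $(\oplus\cdot +)$ via Lemma~\ref{simple} and condition~(\ref{c2}), and repeated pairs are closed by $(\CkptcomplHyp)$. What you add, and the paper's sketch omits, is the explicit termination argument: guardedness gives finitely many unfolded subterms, hence finitely many reachable configurations, and since $\Gamma$ grows strictly at each application of a choice rule every branch eventually closes by $(\CkptcomplHyp)$. The paper says only ``by co-induction on the definition of $\complyR$'' and leaves finiteness to the claim that the system is algorithmic, so this part of your proposal is a welcome increase in explicitness rather than a different method (small remark: an $(\CkptcomplAx)$ step with a non-trivial premise does not enlarge $\Gamma$, but two such steps cannot be consecutive, so your bound survives).

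One step is asserted rather than proved, and it does not follow from conditions~(\ref{c1})--(\ref{c2}) as you claim: the side condition of $( + \cdot \oplus)$, namely that either $\delta=\gamma=\circ$ or the rollback premise $\np{\circ}{\delta}\complyF\np{\circ}{\gamma}$ is well-formed and provable. Condition~(\ref{c1}) enforces the alignment of the pasts only at configurations with no $\tau$-reduction, whereas mixed pasts ($\delta\in\sbc$ and $\gamma=\circ$, say) do arise after a $\tau$-step in which only one of the two behaviours is checkpointed; if such a configuration is compliant because the interaction never becomes stuck (e.g.\ both sides then loop on matching actions), neither disjunct of the side condition is available and your recursion has no rule to apply, so ``propagating (\ref{c1})--(\ref{c2})'' does not close this case. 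You would need an additional invariant on the pairs actually visited, or a separate analysis of mixed configurations. To be fair, the paper's own sketch glosses over exactly the same point with ``so in all cases rule $( + \cdot \oplus)$ applies'', so your proposal is faithful to it; but a self-contained proof must address this case explicitly.
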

\begin{proof}{\em (Sketch)}
By co-induction on the definition of $\complyR$. If $\np\delta\rho\pp \np\gamma\sigma\not\!\!\ored{\tau}$, then  $\rho=\stopA$ and either $ \delta=\gamma=\circ$ or $\delta,\gamma\in\sbc$ by condition~(\ref{c1}) of Definition~\ref{ccr}. In the second case\\ \centerline{$\np\delta\stopA \pp \np\gamma\sigma\ored{\rlbk}
\np\circ\delta\pp\np\circ\gamma$,} which implies $\np\circ\delta\complyR\np\circ\gamma$ by condition~(\ref{c2}) of Definition~\ref{ccr}. So in all cases axiom $(\CkptcomplAx)$ applies.

If $\np\delta\rho\pp \np\gamma\sigma\ored{\tau}$, then either $\rho = {\whiteckpt_1(\sum_{i\in I\cup J} a_i.{\rho}_i)}$ 
 and $\sigma = {\whiteckpt_2(\bigoplus_{j\in J} \Dual{a}_j.{\sigma}_j)}$ or $\rho = {\whiteckpt_1(\bigoplus_{i\in I} \Dual{a}_i.{\rho}_i)}$
 and $\sigma = {\whiteckpt_2(\sum_{j\in I\cup J} a_j.{\sigma}_j)}$. We consider the first case, the proof for the second case being similar. In this case  \mbox{$\np\delta\rho\pp\np\gamma\sigma\ored{\tau} \np{\p\delta\rho}{\rho_j}\pp\np{\p\gamma\sigma}{\sigma_j}$} for all $j\in J$ by Lemma~\ref{simple}. This implies\\ 
 \centerline{$\np{\p\delta\rho}{\rho_j}\complyR\np{\p\gamma\sigma}{\sigma_j}$} by condition~(\ref{c2}) of Definition~\ref{ccr}. If $\np\delta\stopA \pp \np\gamma\sigma\ored{\rlbk}
\np\circ\delta\pp\np\circ\gamma$ we get also $\np\circ\delta\complyR\np\circ\gamma$ by condition~(\ref{c2}) of Definition~\ref{ccr}. So in all cases rule $( + \cdot \oplus)$ applies.
\end{proof}

The main result of our paper is that the formal system provides a complete axiomatic characterisation of the checkpoint compliance, which leads to an decision procedure for checkpoint compliance:

\begin{theorem}[Main Theorem] The formal system $\der$ characterises checkpoint compliance, i.e. 
\[\rho\complyR \sigma \text{ ~~~iff~~~ }
\der\np\circ\rho\complyF \np\circ\sigma.\]
\end{theorem}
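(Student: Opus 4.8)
The plan is to read the Main Theorem off directly from the two preceding results, Soundness (Theorem~\ref{sound}) and Completeness (Theorem~\ref{com}), by instantiating the environment $\Gamma$ with the empty set: with no hypotheses available the semantic side conditions attached to both theorems become vacuous, and the two implications collapse exactly to the desired equivalence.

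Concretely, for the direction from right to left I would assume $\der\np\circ\rho\complyF\np\circ\sigma$, which is the judgement $\Gamma\der\np\circ\rho\complyF\np\circ\sigma$ with $\Gamma=\emptyset$; since $\Gamma$ is empty, the requirement of Theorem~\ref{sound} that every assumption in $\Gamma$ be a genuine instance of $\complyR$ holds trivially, so Theorem~\ref{sound} yields $\np\circ\rho\complyR\np\circ\sigma$, i.e.\ $\rho\complyR\sigma$ by Definition~\ref{ccr}(\ref{ccr3}). For the converse I would assume $\rho\complyR\sigma$, unfold it to $\np\circ\rho\complyR\np\circ\sigma$ via Definition~\ref{ccr}(\ref{ccr3}), and apply Theorem~\ref{com} with $\Gamma=\emptyset$ --- again the side condition is vacuous --- obtaining a derivation $\der\np\circ\rho\complyF\np\circ\sigma$. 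Nothing further is needed for the equivalence itself.

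The genuine content, and the only place where care is required, is in guaranteeing that the object produced by completeness is a \emph{finite} derivation, so that $\der$ is really a formal system and the equivalence really yields a decision procedure for $\complyR$. The completeness construction is co-inductive: starting from $\np\circ\rho\pp\np\circ\sigma$ it records each visited pair of configurations in $\Gamma$, applies $(\CkptcomplAx)$ at stuck or rollback-only states, applies $(+\cdot\oplus)$ or $(\oplus\cdot+)$ to drive the synchronisations together with the induced rollback, and closes a branch with $(\CkptcomplHyp)$ as soon as the current pair reappears in $\Gamma$. To see that this terminates one shows that only finitely many configuration pairs are reachable from $\np\circ\rho\pp\np\circ\sigma$: under the equi-recursive identification of $\rec x.\sigma$ with $\sigma[\rec x.\sigma/x]$ each behaviour has finitely many syntactic subterms, the second components of reachable configurations stay within those subterms, and the ``past'' components range over the checkpointed subterms together with $\circ$. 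Hence every branch hits an axiom or a repeated hypothesis within a bounded number of steps, and, the tree being finitely branching, it is finite by K\"onig's lemma. This boundedness argument is the main obstacle: it is routine in spirit but must be carried out carefully, since it is what justifies calling $\der$ a formal system and what establishes the decidability of $\complyR$. Once it is in place, the Main Theorem and its algorithmic corollary follow at once.
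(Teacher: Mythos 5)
Your proposal is correct and follows essentially the same route as the paper: the Main Theorem is an immediate corollary of Theorem~\ref{sound} and Theorem~\ref{com} instantiated with $\Gamma=\emptyset$, where the semantic side conditions on $\Gamma$ become vacuous, together with Definition~\ref{ccr}(\ref{ccr3}). Your additional argument that the completeness construction yields a finite derivation (finitely many reachable configuration pairs under the equi-recursive reading, hence termination via $(\CkptcomplHyp)$) goes beyond what the paper spells out, but it correctly supports the paper's claim that the system is algorithmic and yields a decision procedure.
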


\section{Related work and conclusion}
Since the pioneering work by Danos and Krivine~\cite{DK04}, reversible computations in process algebras have been widely studied. The calculus of~\cite{DK04} adds a distributed monitoring system to CCS~\cite{M89} allowing to rewind computations. 
Phillips and Ulidowski~\cite{PU07} propose a method for reversing process operators that are definable by SOS rules in a general format, using keys to bind synchronised actions together.
A reversible variant of the higher-order $\pi$-calculus is defined in~\cite{LMS10}, using name tags for identifying threads and explicit memory processes. In~\cite{LMSS11} Lanese et al.  enrich the calculus of~\cite{LMS10} with a fine-grained rollback primitive. 
The closest paper to ours 
is ~\cite{TY14}, where Tiezzi and Yoshida study the interplay between reverse computations and session-based interactions. Their calculus uses tags and memories as previous proposals in the literature on reversibility.

As pointed out in~\cite{PU07}, reversibility in process calculi is challenging, since  we cannot distinguish between the processes $a\| a$ and $a.a$ by simply recording the past actions. For this reason both histories and unique identifiers for threads have been used to track information. A key requirement, dubbed {\em causal consistency} in ~\cite{DK04}, is that of undoing only actions if no other action depending on them has been executed (and not undone). Session behaviours overcome all these problems: in fact both the client and the server reduce in a sequential way. This justifies the relative simplicity of our  calculus. 

We plan to investigate whether our approach can be extended to multi-party sessions~\cite{HYC08}, the rational being that the parallelism is limited since the interactions must follow the communication protocols prescribed by global types. The subbehaviour relation induced by our notion of compliance is also worth being thoroughly studied.

\medskip
\noindent
{\bf Acknowledgements} The authors gratefully thank the 
referees  for their numerous constructive remarks.

\medskip

\bibliographystyle{eptcs}
\vspace{-1mm}
\bibliography{session}

\begin{thebibliography}{10}
\providecommand{\bibitemdeclare}[2]{}
\providecommand{\surnamestart}{}
\providecommand{\surnameend}{}
\providecommand{\urlprefix}{Available at }
\providecommand{\url}[1]{\texttt{#1}}
\providecommand{\href}[2]{\texttt{#2}}
\providecommand{\urlalt}[2]{\href{#1}{#2}}
\providecommand{\doi}[1]{doi:\urlalt{http://dx.doi.org/#1}{#1}}
\providecommand{\bibinfo}[2]{#2}

\bibitemdeclare{inproceedings}{BdL10}
\bibitem{BdL10}
\bibinfo{author}{Franco \surnamestart Barbanera\surnameend} \&
  \bibinfo{author}{Ugo \surnamestart de'Liguoro\surnameend}
  (\bibinfo{year}{2010}): \emph{\bibinfo{title}{Two notions of sub-behaviour
  for session-based client/server systems}}.
\newblock In: {\sl \bibinfo{booktitle}{PPDP}}, \bibinfo{publisher}{ACM Press},
  pp. \bibinfo{pages}{155--164}, \doi{10.1145/1836089.1836109}.

\bibitemdeclare{article}{BdL13}
\bibitem{BdL13}
\bibinfo{author}{Franco \surnamestart Barbanera\surnameend} \&
  \bibinfo{author}{Ugo \surnamestart de' Liguoro\surnameend}
  (\bibinfo{year}{2014}): \emph{\bibinfo{title}{{Sub-behaviour relations for
  session-based client/server systems}}}.
\newblock {\sl \bibinfo{journal}{Math. Struct. in Comp. Science}}.
\newblock \bibinfo{note}{To appear}.

\bibitemdeclare{article}{BH13}
\bibitem{BH13}
\bibinfo{author}{Giovanni \surnamestart Bernardi\surnameend} \&
  \bibinfo{author}{Matthew \surnamestart Hennessy\surnameend}
  (\bibinfo{year}{2014}): \emph{\bibinfo{title}{Modelling session types using
  contracts}}.
\newblock {\sl \bibinfo{journal}{Math. Struct. in Comp. Science}}.
\newblock \bibinfo{note}{To appear}.

\bibitemdeclare{inproceedings}{CCLP06}
\bibitem{CCLP06}
\bibinfo{author}{S.~\surnamestart Carpineti\surnameend},
  \bibinfo{author}{G.~\surnamestart Castagna\surnameend},
  \bibinfo{author}{C.~\surnamestart Laneve\surnameend} \&
  \bibinfo{author}{L.~\surnamestart Padovani\surnameend}
  (\bibinfo{year}{2006}): \emph{\bibinfo{title}{A formal account of contracts
  for {Web} {S}ervices}}.
\newblock In: {\sl \bibinfo{booktitle}{WS-FM}}, {\sl \bibinfo{series}{LNCS}}
  \bibinfo{volume}{4184}, \bibinfo{publisher}{Springer}, pp.
  \bibinfo{pages}{148--162}, \doi{10.1007/11841197\_10}.

\bibitemdeclare{article}{CGP10}
\bibitem{CGP10}
\bibinfo{author}{Giuseppe \surnamestart Castagna\surnameend},
  \bibinfo{author}{Nils \surnamestart Gesbert\surnameend} \&
  \bibinfo{author}{Luca \surnamestart Padovani\surnameend}
  (\bibinfo{year}{2009}): \emph{\bibinfo{title}{A theory of contracts for Web
  services}}.
\newblock {\sl \bibinfo{journal}{ACM Trans. on Prog. Lang. and Sys.}}
  \bibinfo{volume}{31}(\bibinfo{number}{5}), pp. \bibinfo{pages}{19:1--19:61},
  \doi{10.1145/1538917.1538920}.

\bibitemdeclare{inproceedings}{DK04}
\bibitem{DK04}
\bibinfo{author}{Vincent \surnamestart Danos\surnameend} \&
  \bibinfo{author}{Jean \surnamestart Krivine\surnameend}
  (\bibinfo{year}{2004}): \emph{\bibinfo{title}{Reversible Communicating
  Systems}}.
\newblock In: {\sl \bibinfo{booktitle}{CONCUR}}, {\sl \bibinfo{series}{LNCS}}
  \bibinfo{volume}{3170}, \bibinfo{publisher}{Springer}, pp.
  \bibinfo{pages}{292--307}, \doi{10.1007/978-3-540-28644-8\_19}.

\bibitemdeclare{inproceedings}{honda.vasconcelos.kubo:language-primitives}
\bibitem{honda.vasconcelos.kubo:language-primitives}
\bibinfo{author}{Kohei \surnamestart Honda\surnameend},
  \bibinfo{author}{Vasco~T. \surnamestart Vasconcelos\surnameend} \&
  \bibinfo{author}{Makoto \surnamestart Kubo\surnameend}
  (\bibinfo{year}{1998}): \emph{\bibinfo{title}{Language Primitives and Type
  Disciplines for Structured Communication-based Programming}}.
\newblock In: {\sl \bibinfo{booktitle}{ESOP}}, {\sl \bibinfo{series}{LNCS}}
  \bibinfo{volume}{1381}, \bibinfo{publisher}{Springer}, pp.
  \bibinfo{pages}{22--138}, \doi{10.1007/BFb0053567}.

\bibitemdeclare{inproceedings}{HYC08}
\bibitem{HYC08}
\bibinfo{author}{Kohei \surnamestart Honda\surnameend}, \bibinfo{author}{Nobuko
  \surnamestart Yoshida\surnameend} \& \bibinfo{author}{Marco \surnamestart
  Carbone\surnameend} (\bibinfo{year}{2008}): \emph{\bibinfo{title}{Multiparty
  Asynchronous Session Types}}.
\newblock In: {\sl \bibinfo{booktitle}{POPL}}, \bibinfo{publisher}{ACM Press},
  pp. \bibinfo{pages}{273--284}, \doi{10.1145/1328897.1328472}.

\bibitemdeclare{inproceedings}{LMSS11}
\bibitem{LMSS11}
\bibinfo{author}{I.~\surnamestart Lanese\surnameend}, \bibinfo{author}{C.~A.
  \surnamestart Mezzina\surnameend}, \bibinfo{author}{A.~\surnamestart
  Schmitt\surnameend} \& \bibinfo{author}{J.-B. \surnamestart
  Stefani\surnameend} (\bibinfo{year}{2011}): \emph{\bibinfo{title}{Controlling
  Reversibility in Higher-Order Pi}}.
\newblock In: {\sl \bibinfo{booktitle}{CONCUR}}, {\sl \bibinfo{series}{LNCS}}
  \bibinfo{volume}{6901}, \bibinfo{publisher}{Springer}, pp.
  \bibinfo{pages}{297--311}, \doi{10.1007/978-3-642-23217-6\_20}.

\bibitemdeclare{inproceedings}{LMS10}
\bibitem{LMS10}
\bibinfo{author}{Ivan \surnamestart Lanese\surnameend},
  \bibinfo{author}{Claudio~Antares \surnamestart Mezzina\surnameend} \&
  \bibinfo{author}{Jean-Bernard \surnamestart Stefani\surnameend}
  (\bibinfo{year}{2010}): \emph{\bibinfo{title}{Reversing Higher-Order Pi}}.
\newblock In: {\sl \bibinfo{booktitle}{CONCUR}}, {\sl \bibinfo{series}{LNCS}}
  \bibinfo{volume}{6269}, \bibinfo{publisher}{Springer}, pp.
  \bibinfo{pages}{478--493}, \doi{10.1007/978-3-642-15375-4\_33}.

\bibitemdeclare{book}{M89}
\bibitem{M89}
\bibinfo{author}{Robin \surnamestart Milner\surnameend} (\bibinfo{year}{1989}):
  \emph{\bibinfo{title}{Communication and concurrency}}.
\newblock \bibinfo{series}{PHI Series in computer science},
  \bibinfo{publisher}{Prentice Hall}.

\bibitemdeclare{article}{PU07}
\bibitem{PU07}
\bibinfo{author}{Iain C.~C. \surnamestart Phillips\surnameend} \&
  \bibinfo{author}{Irek \surnamestart Ulidowski\surnameend}
  (\bibinfo{year}{2007}): \emph{\bibinfo{title}{Reversing algebraic process
  calculi}}.
\newblock {\sl \bibinfo{journal}{J. of Logic and Alg. Progr.}}
  \bibinfo{volume}{73}(\bibinfo{number}{1-2}), pp. \bibinfo{pages}{70--96},
  \doi{10.1016/j.jlap.2006.11.002}.

\bibitemdeclare{inproceedings}{TY14}
\bibitem{TY14}
\bibinfo{author}{Francesco \surnamestart Tiezzi\surnameend} \&
  \bibinfo{author}{Nobuko \surnamestart Yoshida\surnameend}
  (\bibinfo{year}{2014}): \emph{\bibinfo{title}{Towards Reversible Sessions}}.
\newblock In: {\sl \bibinfo{booktitle}{PLACES}}, {\sl \bibinfo{series}{EPTCS}}
  \bibinfo{volume}{155}, pp. \bibinfo{pages}{17--24},
  \doi{10.4204/EPTCS.155.3}.

\end{thebibliography}

\end{document}